\definecolor{webgreen}{rgb}{0,.5,0}
\definecolor{webbrown}{rgb}{.8,0,0}
\definecolor{emphcolor}{rgb}{0.95,0.95,0.95}
\numberwithin{equation}{section} \linespread{1.2}
\newtheorem{prop}{Proposition}
\newtheorem{example}{Example}
\newtheorem{definition}{Definition}
\newtheorem{remark}{Remark}
{\bf}{\it}
\newcommand{\R}{\mathbb{R}}
\title[On the Stickiness Property]{On the Stickiness Property \thanks{We would like to thank
Jean Pierre Fouque, Balgobin Nandram 
Jesus Rodriguez and L. C. G Rogers  for helpful comments.}}
\author{Erhan Bayraktar}
\address[Erhan Bayraktar]{Department of
  Mathematics, University of Michigan, Ann Arbor, MI 48109}
\email{erhan@umich.edu}
\thanks{E. Bayraktar is supported in part by the National Science Foundation under grant number DMS-090625.}
\author{  Hasanjan Sayit}
\address[Hasanjan Sayit] {Department of Mathematics, Worcester Polytechnic Institute.}
\email{hs7@WPI.EDU}
\subjclass[2000]{ 60G99, 62P05, 91B70, 26A99}
\keywords{Transaction costs, no arbitrage, sticky processes }
\begin{document}
\begin{abstract} In \cite{Gua} the
notion of stickiness for stochastic processes was introduced. It was also shown that stickiness 
implies absence of arbitrage in a market with proportional transaction costs. In this paper, we
investigate the notion of stickiness further. In particular, we give examples of processes that are not semimartingales but are sticky.
\end{abstract}
\maketitle

\section{Introduction}

In \cite{Gua}, no arbitrage conditions for markets with proportional 
transaction costs were given. In the framework of \cite{Gua} an investor can invest in an asset
$X$ which is a c\'adl\'ag (a French acronym which means right continuous with left limits), adapted, and quasi-left continuous  process. The investor follows an admissible strategy $\theta$, which is an adapted left continuous process of finite variation and satisfies
 \[
 V_t(\theta)= \int_0^{t} \theta_s dX_s -k \int_0^{t}X_s d(D \theta)_s-kX_t |\theta_t| \geq -M
 \]
almost surely, for some $M>0$ and all $t>0$. Here, $D \theta$ denotes the weak derivative of $\theta$ and $|D \theta|$ is the total variation of $D \theta$. Here $V_t$ is the liquidation value of investors portfolio and $k$ is the proportional transaction cost. We say that an admissible strategy is an arbitrage on $[0,T]$ if $V_T(\theta) \geq 0$ and $P(V_T>0)>0$.

If $X_t$ is \emph{sticky},  \cite{Gua} showed that there is no arbitrage with respect to $e^{X_t}$ when there are transaction costs. 
Let us recall the definition of stickiness from \cite{Gua}. 
\begin{definition} \label{defi1} A progressively measurable process $X$ is
  sticky with respect to the filtration $\mathbb{F}$ and the probability measure $P$ if for all 
$\epsilon>0, T>0$ and all stopping times $\tau$ such that $P(\tau<
T)>0$, we have that:
\[
P\left(\sup_{t\in [\tau, T]}|X_{\tau}-X_t|<\epsilon, \tau <T\right)>0.
\]
\end{definition}
Stickiness is satisfied by any continuous 
process that has full support
in the Wiener space (Proposition 4.1
of \cite{Gua}),  and also by 
a large class of Markov processes (Proposition 3.1 of \cite{Gua}), including 
diffusions and L\'{e}vy processes. Then, by exploiting the characterization of the support of Gaussian processes, it was shown that fractional Brownian motion (fBm) has full support in the 
Wiener space which implies its stickiness. As a result when there are proportional transaction costs, even when the asset prices are modeled by geometric fBm, there is no arbitrage opportunity. (FBm modulated markets allow for
arbitrage if there are no transaction costs, see e.g. \cite{Bayraktar} and \cite{LCGR}.) We should point out that introducing other types of frictions into the market, in particular by putting suitable restrictions on allowable trading strategies, we can have price processes that are not semimartingales but still have no-arbitrage in the market (see e.g. \cite{BS09} and \cite{HPJ}).

The purpose of our note is to study the notion of stickiness further and give other examples of processes that are sticky. We will first show that continuous functions of stick processes remain sticky (see Proposition~\ref{stickymonotone}). Next we show that bounded time changes of sticky processes remain sticky (see Proposition~\ref{timechange}). Using this result, we then prove a stickiness result for semimartingales (see Proposition~\ref{time}). Thanks to these results we are able to give examples of sticky processes that are not semimartingales. Therefore, our examples allow for arbitrage in the frictionless markets, but do not allow for arbitrage in the markets with transaction costs. Other papers, which give examples of processes satisfying no arbitrage conditions in the markets with transaction costs include \cite{MR2398764} and \cite{MR2432181}.

Throughout the paper we assume we are given a complete, filtered
probability space $(\Omega,\mathcal{F},P,\mathbb{F}=(\mathcal{F})_{t \geq 0})$ satisfying the 
``usual hypotheses" (i.e., the filtration 
$\mathbb{F}$ is right continuous, and $\mathcal{F}_0$ contains all of
the $P$ null sets of $\mathcal{F}$).

\section{Main Results}

Before stating our main results, we give alternative formulations of stickiness. These formulations will be useful in the proofs of our main results.

\begin{prop}\label{prop:new-char} Let $X$ be a progressively measurable process with respect to the filtration $\mathbb{F}$. Then the following statements are equivalent
\begin{enumerate}
\item[(a)] $X$ is sticky.
\item[(b)] For any bounded stopping time $\tau$ of $\mathbb{F}$ and any $A\in \mathcal{F}_{\tau}$ with $P(A)>0$, we have $P(A\cap \{\sup_{t\in[\tau, T] }|X_{\tau}-X_t|<\epsilon\})>0$ for any $\epsilon>0$ and any 
$T\geq \tau$ a.s.
\item[(c)] For any bounded stopping time $\tau_0$ of $\mathbb{F}$ and any number $\delta>0$, the stopping time
$\tau_1=\inf\{t\geq \tau_0: |X_t-X_{\tau_0}|>\delta\}$ is unbounded on $A$ for any $A\in \mathcal{F}_{\tau_0}$ with $P(A)>0$.
\end{enumerate}
\end{prop}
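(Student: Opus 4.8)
The plan is to prove the three-way equivalence by establishing a cycle of implications, say $(a)\Rightarrow(b)\Rightarrow(c)\Rightarrow(a)$, or alternatively by showing $(b)$ and $(c)$ are each equivalent to $(a)$. The cleanest route is probably $(b)\Rightarrow(a)$ trivially (taking $A=\Omega$ or, more carefully, noting that the original definition is a special case of $(b)$), then doing the real work in $(a)\Rightarrow(b)$ and in the equivalence of $(c)$ with the others.

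Let me think about $(a)\Rightarrow(b)$. Suppose $X$ is sticky and let $\tau$ be a bounded stopping time, $A\in\mathcal{F}_\tau$ with $P(A)>0$, and fix $\epsilon>0$, $T\geq\tau$. The idea is to define a new stopping time that restricts attention to $A$. Since $A\in\mathcal{F}_\tau$, I can define $\tau_A=\tau$ on $A$ and $\tau_A=+\infty$ (or some value $\geq T$) off $A$; one checks $\tau_A$ is a stopping time because $\{\tau_A\leq t\}=A\cap\{\tau\leq t\}\in\mathcal{F}_t$. Then $P(\tau_A<T)\geq P(A\cap\{\tau<T\})$, and applying Definition~\ref{defi1} to $\tau_A$ gives
\[
P\left(\sup_{t\in[\tau_A,T]}|X_{\tau_A}-X_t|<\epsilon,\ \tau_A<T\right)>0.
\]
On the event $\{\tau_A<T\}$ we have $\tau_A=\tau$ and this event is contained in $A$, so the above probability equals $P(A\cap\{\sup_{t\in[\tau,T]}|X_\tau-X_t|<\epsilon\}\cap\{\tau<T\})$, which is what we want. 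The one subtlety is the case where $\tau=T$ with positive probability on $A$; I would handle the statement "for any $T\geq\tau$ a.s." by noting the supremum over the degenerate interval $[\tau,\tau]$ is $0<\epsilon$, so those points contribute positively and cause no difficulty.

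For the equivalence of $(c)$, the key observation is the identification
\[
\left\{\sup_{t\in[\tau_0,T]}|X_t-X_{\tau_0}|>\delta\right\}=\{\tau_1\leq T\},
\]
where $\tau_1=\inf\{t\geq\tau_0:|X_t-X_{\tau_0}|>\delta\}$, so the stickiness-type statement that the supremum stays below a threshold is exactly the statement that $\tau_1$ exceeds $T$. To derive $(c)$ from $(b)$: given $A\in\mathcal{F}_{\tau_0}$ with $P(A)>0$ and any bound $T$, apply $(b)$ with $\tau=\tau_0$ and $\epsilon=\delta$ to produce a positive-probability subset of $A$ on which the supremum up to $T$ stays below $\delta$, hence on which $\tau_1>T$; since $T$ is arbitrary, $\tau_1$ cannot be bounded on $A$. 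For the converse $(c)\Rightarrow(b)$, I would argue by contradiction: if $(b)$ failed for some $A,\epsilon,T$, then $P(A\cap\{\tau_1>T\})=0$ with $\delta=\epsilon$, meaning $\tau_1\leq T$ almost surely on $A$, so $\tau_1$ is bounded on the positive-probability set $A$, contradicting $(c)$. A minor point is reconciling the strict versus non-strict inequalities ($\sup<\epsilon$ versus $|X_t-X_{\tau_0}|>\delta$); using the same constant for $\epsilon$ and $\delta$ and being slightly careful at the boundary, or passing to a marginally smaller $\del'<\delta$, resolves this.

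I expect the main obstacle to be the measurability and boundary bookkeeping rather than any deep probability: verifying that the restricted stopping time $\tau_A$ really is an $\mathbb{F}$-stopping time, handling the a.s. quantifier "$T\geq\tau$" cleanly (especially the degenerate case $\tau=T$), and matching the open/closed inequalities when translating between the supremum formulation and the hitting-time formulation. None of these are hard, but they require care to state correctly; the genuine content is simply that restricting a sticky process to an arbitrary positive-probability event in $\mathcal{F}_\tau$ preserves the defining property, which is exactly what the stopping-time gluing construction accomplishes.
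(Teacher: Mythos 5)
Your proposal is correct and follows essentially the same route as the paper: the glued stopping time ($\tau$ on $A$, a constant off $A$), the separate treatment of the degenerate case where $\tau=T$ a.s.\ on $A$, and the supremum/hitting-time duality with a strictly smaller threshold (your $\delta'<\delta$, the paper's $\delta/2$ and $\epsilon/2$) are precisely the paper's devices, merely rearranged as $(a)\Leftrightarrow(b)$ and $(b)\Leftrightarrow(c)$ instead of the cycle $(a)\Rightarrow(b)\Rightarrow(c)\Rightarrow(a)$. One small caution: $(b)\Rightarrow(a)$ is not literally ``a special case,'' since Definition~\ref{defi1} permits unbounded $\tau$ and taking $A=\Omega$ loses the restriction to $\{\tau<T\}$; the clean fix is to apply $(b)$ to the bounded stopping time $\tau\wedge T$ with $A=\{\tau<T\}\in\mathcal{F}_{\tau\wedge T}$, which is exactly the gluing the paper uses in its proof of $(c)\Rightarrow(a)$.
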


\begin{proof} $(a)\Rightarrow (b)$: Take any
bounded stopping time $\tau$ and any $T\geq \tau$ a.s. Let
$A\in \mathcal{F}_{\tau}$ with $P(A)>0$. Define $\tau^A=\tau$ on $A$ and
$\tau^A=T$ on $A^c$. Then we have 
$\{\sup_{t \in [\tau,
  T]}|X_{\tau}-X_t|<\epsilon, \tau^A<T\}\subset A\cap \{\sup_{t\in [\tau,
  T]}|X_{\tau}-X_t|<\epsilon\}
$. Now if $P(\tau^A<T)>0$, then since $X$ is sticky we have $P(\{\sup_{t \in [\tau,
  T]}|X_{\tau}-X_t|<\epsilon, \tau^A<T\})>0$. If $P(\tau^A<T)=0$, then $\tau=T$ a.s., so 
$A\cap \{\sup_{t\in [\tau,
  T]}|X_{\tau}-X_t|<\epsilon\}=A$. Therefore in both cases we have
$P(A\cap \{\sup_{t\in [\tau, T]}|X_{\tau}-X_t|<\epsilon\})>0$.

$(b)\Rightarrow (c)$: Let us assume that there exists a bounded stopping time $\tau_0$ and a real number $\delta>0$, such that
$\tau_1=\inf\{t\geq \tau_0: |X_t-X_{\tau_0}|>\delta\}$ is bounded on some $A\in \mathcal{F}_{\tau_0}$ with $P(A)>0$. Then since $X$ is c\'adl\'ag we have that $|X_{\tau_1}-X_{\tau_0}| \geq \delta$ on $A$.
As a result, for any real number $T$ with $T>\tau_1$ on $A$, we have  $P(A\cap\{\sup_{t\in [\tau_0, T]}|X_t-X_{\tau_0}|<\frac{\delta}{2}\})=0$. 

$(c)\Rightarrow (a)$: Let $\tau$ be any stopping time and $T$ be any real number with $P(\tau<T)>0$. Let $A=\{\tau<T\}$, then $A\in \mathcal{F}_{\tau}$ and $P(A)>0$. Let $\tau^A=\tau 1_{A}+ T 1_{A^{c}}$, which is a bounded stopping time. Now, for any $\epsilon>0$, since $\tau_1=\inf\{t\geq \tau^A: |X_t-X_{\tau^A}|>\frac{\epsilon}{2}\}$ is unbounded on $A$, we have that there exists $A_1 \subset A$ with $P(A_1)>0$ such that
 $\sup_{t\in[\tau^A, T]}|X_t-X_{\tau^A}|<\epsilon$ on $A_1$; from which the result follows.
\end{proof}

Next, we show that stickiness is preserved under composition with continuous functions.

\begin{prop}\label{stickymonotone} Let $X$ be a progressively measurable, c\`adl\`ag process
that takes values on $(a, b)$, for $a,b \in \bar{\R}=[-\infty,\infty]$.
Let $f$ be a continuous function on
 $(a, b)$. If $X$ is sticky, then $f(X)$ is also sticky.
\end{prop}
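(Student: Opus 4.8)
The plan is to verify Definition~\ref{defi1} directly for the process $Y:=f(X)$. Since $X$ is c\`adl\`ag and progressively measurable and $f$ is continuous, $Y$ is again c\`adl\`ag and progressively measurable, so the quantities $\sup_{t\in[\tau,T]}|Y_\tau-Y_t|$ are well defined and measurable. The only genuine difficulty is that $f$ need not be \emph{uniformly} continuous on the (possibly unbounded) open interval $(a,b)$: the modulus of continuity controlling $|f(x)-f(X_\tau)|$ in terms of $|x-X_\tau|$ depends on where the random value $X_\tau$ sits. I would therefore localize $X_\tau$ to a compact subinterval of $(a,b)$, on which $f$ is uniformly continuous, and transfer the stickiness of $X$ there.

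Fix $\epsilon>0$, a real $T>0$, and a stopping time $\tau$ with $P(\tau<T)>0$; replacing $\tau$ by $\tau\wedge T$ (which agrees with $\tau$ on $\{\tau<T\}$ and leaves the event $\{\tau<T\}$ and the target probability unchanged) I may assume $\tau$ is bounded with $\tau\leq T$. Exhaust $(a,b)$ by an increasing sequence of compact intervals $K_n\subset(a,b)$. Because $X$ takes values in $(a,b)$, we have $\{\tau<T\}=\bigcup_n\{\tau<T,\ X_\tau\in K_n\}$ up to a null set, so, since $P(\tau<T)>0$, there is an $n$ for which $A:=\{\tau<T,\ X_\tau\in K_n\}$ satisfies $P(A)>0$. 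As $X_\tau$ is $\mathcal{F}_\tau$-measurable and $\{\tau<T\}\in\mathcal{F}_\tau$, we have $A\in\mathcal{F}_\tau$. Write $K_n=[c,d]$ and choose $\eta>0$ with $[c-\eta,d+\eta]\subset(a,b)$.

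Now $f$ is uniformly continuous on the compact set $[c-\eta,d+\eta]$, so there is a $\delta\in(0,\eta]$ with $|f(x)-f(y)|<\epsilon/2$ whenever $x,y\in[c-\eta,d+\eta]$ and $|x-y|<\delta$. Applying Proposition~\ref{prop:new-char}(b) to the bounded stopping time $\tau$, the set $A\in\mathcal{F}_\tau$, the level $\delta$, and the horizon $T\geq\tau$, I obtain
\[
P\Big(A\cap\big\{\sup_{t\in[\tau,T]}|X_\tau-X_t|<\delta\big\}\Big)>0.
\]
On this intersection $X_\tau\in[c,d]$ and every $X_t$, $t\in[\tau,T]$, lies within $\delta\leq\eta$ of $X_\tau$, hence $X_t\in[c-\eta,d+\eta]$; uniform continuity then gives $|Y_\tau-Y_t|<\epsilon/2$ for all such $t$, so $\sup_{t\in[\tau,T]}|Y_\tau-Y_t|\leq\epsilon/2<\epsilon$. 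Since moreover $A\subset\{\tau<T\}$, the above intersection is contained in $\{\sup_{t\in[\tau,T]}|Y_\tau-Y_t|<\epsilon,\ \tau<T\}$, which therefore has positive probability; this is exactly the stickiness of $Y=f(X)$.

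The crux is the localization in the second paragraph: everything hinges on producing a compact interval $K_n$ that both carries positive mass for $X_\tau$ on $\{\tau<T\}$ and leaves room for an $\eta$-buffer inside $(a,b)$, so that the uniform modulus of continuity of $f$ can be applied to \emph{all} of $\{X_t:t\in[\tau,T]\}$ and not merely to $X_\tau$. Once this buffer is in place the passage from closeness of $X$ to closeness of $f(X)$ is automatic, and the required positivity is delivered verbatim by part (b) of Proposition~\ref{prop:new-char}.
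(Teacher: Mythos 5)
Your proof is correct and follows essentially the same route as the paper's: localize $X_\tau$ to a compact subinterval of $(a,b)$ carrying positive probability, use uniform continuity of $f$ there, and invoke characterization (b) of Proposition~\ref{prop:new-char} for $X$ to get the required positive-probability event. Your $\eta$-buffer around $K_n$ (with $\delta\le\eta$) is a nice touch that makes explicit why all of $\{X_t: t\in[\tau,T]\}$, and not just $X_\tau$, stays in the domain of uniform continuity --- a detail the paper's choice $\delta=\min\{\delta_0,1/n_0\}$ handles less transparently --- and your exhaustion by compacts treats the bounded and unbounded cases of $(a,b)$ uniformly, where the paper splits them.
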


\begin{proof} Let us first prove the statement when $a,b$ are bounded.
Due to the equivalence of (a) and (b) in Proposition~\ref{prop:new-char}, we need to show for any bounded stopping time $\tau$ and
  any $A\in \mathcal{F}_{\tau}$ with $P(A)>0$, we have 
that $P(A\cap \{\sup_{t\in [\tau, T]}|f(X_t)-f(X_{\tau})|< \epsilon\})>0$
for any $\epsilon >0$ and $T$ with $\tau\le T$ a.s. Since $X_{\tau}$
takes values in $(a, b)$ and $P(A)>0$, for sufficiently large 
$n_0\in \mathbb{N}_+$, the
event $B=A\cap \{X_{\tau}\in [a+\frac{1}{n_0}, b-\frac{1}{n_0}]\}$ has
positive probability and $B \in \mathcal{F}_{\tau}$. The function $f$ is continuous
on $(a,b)$, so it is uniformly continuous on
$[a+\frac{1}{n_0}, b-\frac{1}{n_0}]$. This means that for a given $\epsilon$,  there exists $\delta_0>0$ such that whenever  $|x-y|<\delta_0$ and $x,y \in [a+\frac{1}{n_0}, b-\frac{1}{n_0}]$ we have $|f(x)-f(y)|<\epsilon$.  Let $\delta=\min\{\delta_0,1/n_0\}$. Since $X$ has the sticky property, we have 
$P(B\cap \{\sup_{t\in [\tau, T]}|X_t-X_{\tau}|<\delta \})>0$. Now as a result of the uniform continuity of $f$ we have that $B\cap \{\sup_{t\in [\tau, T]}|X_t-X_{\tau}|<\delta \}\subset 
B\cap \{\sup_{t\in [\tau, T]}|f(X_t)-f(X_{\tau})|<\epsilon\}$, from which 
the result follows since $B \subset A$. When $a=-\infty$ and/or $b=\infty$, the above proof
can be adjusted by replacing $a+1/n_0$ with $a_n\downarrow -\infty$ and/or 
$b-1/n_0$ with $b_n\uparrow \infty$.   
\hfill  \end{proof}

\begin{example} In \cite{Gua}, the stickiness of $|B_t|$, where $B_t$ is Brownian motion, was explained by the strong Markov property of the $|B_t|$. Using Proposition~\ref{stickymonotone} the stickiness of $|B_t|$ can be explained by the continuity of $x \to |x|$ and the stickiness of $B_t$.

\end{example}

\begin{example} Let $B_t^H$ be the fractional Brownian motion with Hurst parameter $H\in [0, 1]$. In \cite{Gua}, it was shown that $B_t^H$ is sticky with respect to its natural filtration. 
Proposition \ref{stickymonotone} can be used to conclude that the process $f(B_t^H)$ is sticky for any continuous function $f$. 
\end{example}

\subsection{Time changed Sticky Processes}

Our next result shows that the stickiness is preserved under bounded time changes. We should mention that the boundedness of the stopping times is crucial as the next example shows.

\begin{example}
Let $\{B_t\}$ be the standard one dimensional Brownian motion. For $s>0$, let $T_s=\inf\{t\geq 0: B_t=s\}$ be the passage time of $B$ to level $s$. It is well-known that $\{T_s\}$ are unbounded stopping times. And it is clear that the time changed process $B_{T_s}=s$ is not sticky.
\end{example}

However, for bounded time changes we can state the following result:

\begin{prop} \label{timechange} Let  $X$ be a continuous process that
 is progressively measurable and sticky with respect to the filtration $\mathbb{F}$. Let $
(\nu_{t})_{t\geq 0}$ be a family of $\mathbb{F}$ stopping times such that $t \rightarrow \nu_{t}$ is continuous, non-decreasing almost surely. We also assume that $\nu_t$ is bounded almost surely for each $t \geq 0$ and $\nu_0=0$.
Let $\widetilde{X}_t=X_{\nu_{t}}$, and $\widetilde{{
\mathcal{F}}}_{t}={\mathcal{F}}_{\nu_{t}}$, for $t\geq 0$. Then 
$\widetilde{X}$ is sticky with respect to the filtration $\widetilde
{\mathbb{F}}=(\widetilde{\mathcal{F}}_t)_{t \geq 0}$. 
\end{prop}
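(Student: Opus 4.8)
The plan is to verify characterization (b) of Proposition~\ref{prop:new-char} for the time-changed pair $(\widetilde{X},\widetilde{\mathbb{F}})$ and to reduce it to the stickiness of $X$ by ``undoing'' the time change. First I would record the elementary structural facts. Since $X$ is continuous and $t\mapsto \nu_t$ is continuous, $\widetilde{X}_t=X_{\nu_t}$ is continuous; it is $\widetilde{\mathbb{F}}$-adapted because $X_{\nu_t}$ is $\mathcal{F}_{\nu_t}=\widetilde{\mathcal{F}}_t$-measurable, hence progressively measurable with respect to $\widetilde{\mathbb{F}}$. I would also note that $\widetilde{\mathbb{F}}$ again satisfies the usual hypotheses (right-continuity descends from the continuity of $\nu$ and the right-continuity of $\mathbb{F}$, while $\widetilde{\mathcal{F}}_0=\mathcal{F}_{\nu_0}=\mathcal{F}_0$ contains the null sets), so that Proposition~\ref{prop:new-char} is applicable to $(\widetilde{X},\widetilde{\mathbb{F}})$.

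The key ingredient is a standard time-change lemma for filtrations (see, e.g., the monograph of Revuz and Yor): if $\tau$ is a bounded $\widetilde{\mathbb{F}}$-stopping time, then $\sigma:=\nu_\tau$ is an $\mathbb{F}$-stopping time and $\widetilde{\mathcal{F}}_\tau\subseteq \mathcal{F}_{\nu_\tau}=\mathcal{F}_\sigma$. Granting this, fix a bounded $\widetilde{\mathbb{F}}$-stopping time $\tau$, a set $A\in\widetilde{\mathcal{F}}_\tau$ with $P(A)>0$, a level $\epsilon>0$, and a deterministic horizon $T\geq\tau$ a.s. Let $C$ be a deterministic constant with $\nu_T\leq C$ a.s., which exists by the boundedness assumption on $\nu_T$. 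Since $\tau\leq T$ and $\nu$ is non-decreasing we get $\sigma=\nu_\tau\leq \nu_T\leq C$ a.s., so $\sigma$ is a bounded $\mathbb{F}$-stopping time dominated by the deterministic horizon $C$, and $A\in\mathcal{F}_\sigma$.

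Now I would invoke the stickiness of $X$ in the form (b) of Proposition~\ref{prop:new-char}, applied to the stopping time $\sigma$, the event $A\in\mathcal{F}_\sigma$, the level $\epsilon$, and the horizon $C$, to obtain
\[
P\bigl(A\cap\{\sup_{s\in[\sigma,C]}|X_\sigma-X_s|<\epsilon\}\bigr)>0.
\]
It then remains to transfer this event back to the new clock. For every $t\in[\tau,T]$, monotonicity of $\nu$ gives $\nu_t\in[\sigma,\nu_T]\subseteq[\sigma,C]$, whence $|\widetilde{X}_\tau-\widetilde{X}_t|=|X_\sigma-X_{\nu_t}|\leq\sup_{s\in[\sigma,C]}|X_\sigma-X_s|$. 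Consequently
\[
A\cap\{\sup_{s\in[\sigma,C]}|X_\sigma-X_s|<\epsilon\}\subseteq A\cap\{\sup_{t\in[\tau,T]}|\widetilde{X}_\tau-\widetilde{X}_t|<\epsilon\},
\]
and the displayed positivity yields $P(A\cap\{\sup_{t\in[\tau,T]}|\widetilde{X}_\tau-\widetilde{X}_t|<\epsilon\})>0$. This is exactly condition (b) for $(\widetilde{X},\widetilde{\mathbb{F}})$, so $\widetilde{X}$ is sticky.

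The step I expect to be the main obstacle is the time-change lemma, namely that $\nu_\tau$ is an $\mathbb{F}$-stopping time and $\widetilde{\mathcal{F}}_\tau\subseteq\mathcal{F}_{\nu_\tau}$; this is where the hypothesis that $(\nu_t)$ is a continuous non-decreasing family of stopping times is genuinely used. A second point requiring care is that the a.s.\ boundedness of $\nu_T$ must be by a deterministic constant, so that the stickiness of $X$ can be invoked with a fixed deterministic horizon $C$; were $\nu_T$ only finite a.s., the natural truncation $\{\nu_T\leq C\}$ would fail to lie in $\mathcal{F}_\sigma$ and could not be used as a conditioning event. Everything else is a pathwise comparison of suprema that relies only on the monotonicity of the clock, so no continuity of $X$ beyond what is assumed is needed for that part.
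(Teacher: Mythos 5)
Your proof is correct, but it takes a genuinely different route from the paper's. You verify characterization (b) of Proposition~\ref{prop:new-char} directly, and the load-bearing step is the time-change lemma you isolate: for a bounded $\widetilde{\mathbb{F}}$-stopping time $\tau$, the time $\sigma=\nu_\tau$ is an $\mathbb{F}$-stopping time and $\widetilde{\mathcal{F}}_\tau\subseteq\mathcal{F}_{\nu_\tau}$. That lemma is indeed standard and valid here (approximate $\tau$ by discrete stopping times $\tau_n\downarrow\tau$; continuity of $t\mapsto\nu_t$ gives $\nu_{\tau_n}\downarrow\nu_\tau$, and the right-continuity of $\mathbb{F}$ from the paper's standing ``usual hypotheses'' lets you pass to the limit both for the stopping-time property and for $\bigcap_n\mathcal{F}_{\nu_{\tau_n}}=\mathcal{F}_{\nu_\tau}$), and your insistence that the a.s.\ bound $\nu_T\le C$ be by a deterministic constant is exactly where the boundedness hypothesis is consumed. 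The paper proceeds quite differently: it argues by contradiction via characterization (c) and deliberately avoids the time-change lemma altogether. Assuming the exit time $\theta$ of $\widetilde{X}$ is bounded on a set $A$ of positive probability, it interposes a \emph{deterministic} time $k$ with $\tau^A<k<\theta_0^A$ on a positive-probability subset $A_1$ (where $\theta_0$ is the exit time at level $\delta/2$), so that the only time-changed stopping time ever needed is $\nu_k$ for deterministic $k$, which is a stopping time by hypothesis, and $A_1\in\widetilde{\mathcal{F}}_k=\mathcal{F}_{\nu_k}$ holds by definition rather than by any lemma; a triangle inequality then shows the exit time $\theta_1=\inf\{t\geq\nu_k:|X_t-X_{\nu_k}|>\frac{\delta}{2}\}$ is bounded on $A_1$ up to a null set, contradicting (c) for $X$. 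What each approach buys: yours is shorter and constructive, producing the positive-probability event directly, at the price of importing a Revuz--Yor-type result whose own proof needs the usual hypotheses and the continuity of $\nu$; the paper's is self-contained and elementary, using only stopping times explicitly furnished by the hypotheses, at the price of a contradiction argument and the $\delta/2$ bookkeeping. Both use the almost sure boundedness of $\nu_t$ by deterministic constants in the same essential way, and both need only monotonicity and continuity of the clock for the pathwise comparison of suprema.
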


\begin{proof} Due to the equivalence of (a) and (c) in Proposition~\ref{prop:new-char}, we need to show for any
  bounded stopping time $\tau$ of $\widetilde{\mathbb{F}}$ and any $\delta
  >0$, the $\widetilde{\mathbb{F}}$-stopping time 
$\theta=\inf \{t\geq \tau:|\widetilde {X}_t-\widetilde {X}_{\tau}|>\delta \}$
is unbounded on
each $A\in \widetilde {\mathcal{F}}_{\tau}$ with $P(A)>0$. Assume the contrary, i.e., $\theta$ is bounded on some  $A\in \widetilde {\mathcal{F}}_{\tau}$ with
$P(A)>0$ and for some $\delta >0$. 
Let 
$\theta_0=\inf\{t\geq \tau:
|\widetilde{X}_t-\widetilde{X}_{\tau}|>\frac{\delta}{2}\}$. Then, the $\widetilde{\mathbb{F}}$-stopping time $\theta_0$ is
bounded on $A$ since $\theta$ is bounded on $A$. 
Let $\theta_0^A=\theta_0$, $\tau^A=\tau$  on $A$
and $\theta_0^A=\infty$, $\tau^A=\infty$  on $A^c$.
Choose a deterministic number $k$
such that the event $A_1=\{\tau^A <k< \theta_0^A \}$ has positive 
probability. Such a number exists because $\tau^A <\theta_0^A$ on
$A$. Also we have $A_1\subset A$ and $A_1\in \widetilde{\mathcal{F}}_k$. Since 
$k<\theta_0$ on $A_1$, we have 
$|\widetilde X_{\tau}-\widetilde X_k|<\frac{\delta}{2}$ on $A_1$. 
Since $|\widetilde X_{\theta}-\widetilde X_{\tau}|\geq \delta$ on $A_1$ and 
$|\widetilde X_{\theta}-\widetilde X_{\tau}|\le |\widetilde X_{\theta}-\widetilde
X_k|+|\widetilde X_k- \widetilde X_{\tau}|$ we have that $|\widetilde
X_{\theta}-\widetilde X_k|> \frac{\delta}{2}$ on $A_1$. Since $\theta$ is bounded, $\nu_{\theta}$ is bounded almost surely, and so the last fact implies that the
stopping time 
$\theta_1=\inf\{t\geq \nu_k: |X_t-X_{\nu_k}|> \frac{\delta}{2}\}$ is
bounded on $A_1-N \in \mathcal{F}_{\nu_k}$ (in which $N$ is a null set), which contradicts the 
stickiness of $X$ (by Proposition~\ref{prop:new-char}). 
\hfill  \end{proof}

\begin{example}
Time changed fBm $B^H_{\nu_t}$, for $\nu_t$ satisfying the assumptions of Proposition~\ref{timechange}, admits arbitrage opportunities in frictionless markets (see e.g. \cite{Bayraktar}). The above proposition tells us that the same model does not admit arbitrage opportunities when there are transaction costs in the market.
\end{example}
 
\subsection{Further Examples of Sticky Processes}

Next we use Proposition~\ref{timechange} to state a stickiness result for semimartingales. This result will then be used to construct sticky processes, which are not semimartingales. The following remark will be useful in the proof of this result.

\begin{remark}\label{stickyremark} It directly follows from Definition~\ref{defi1} that if a progressively measurable process $X$ is sticky with respect to a filtration $\mathbb{F}$ and 
 $\mathbb{G}$ is a subfiltration of $\mathbb{F}$ and that $X$ is progressively measurable with respect to
$\mathbb{G}$, then $X$ is sticky with respect to the filtration $\mathbb{G}$. Also,  
if $Q$ is an equivalent measure to the 
original measure $P$, then the stickiness of $X$ under $P$ is equivalent to the
stickiness of $X$ under $Q$. 
\end{remark}

\begin{prop}\label{time} Let $X$ be a continuous $(\mathbb{F},P)$-semimartingale 
that admits an equivalent local martingale measure $Q$. Assume that $X_0=0$. 
Let $f$ be a determinstic continuous function on $[0, \infty)$ with $f(0)=0$.
If $[X, X]_t$ is bounded for each $t>0$ and $\lim_{t \rightarrow \infty}  [X,X]_t=\infty$, then
$X_t-f([X,X]_t)$ is sticky with
respect to the filtration $\mathbb{F}$.
 \end{prop}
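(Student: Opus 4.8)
The plan is to reduce the claim to a bounded time change of a sticky process via the Dambis--Dubins--Schwarz (DDS) representation and then invoke Proposition~\ref{timechange}. First I would pass to the equivalent local martingale measure $Q$: by the second part of Remark~\ref{stickyremark} stickiness under $P$ is equivalent to stickiness under $Q$, so it suffices to argue under $Q$. Under $Q$ the process $X$ is a continuous local martingale with $X_0=0$, and by hypothesis $\lim_{t\to\infty}[X,X]_t=\infty$ a.s., so the DDS theorem applies: there is a $Q$-Brownian motion $W$, adapted to the time-changed filtration $\mathcal{G}_s=\mathcal{F}_{\tau_s}$ with $\tau_s=\inf\{u:[X,X]_u>s\}$, such that $X_t=W_{[X,X]_t}$ for all $t$. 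Setting $\widetilde{W}_s=W_s-f(s)$ and using that $f$ is deterministic, I obtain the key identity $X_t-f([X,X]_t)=\widetilde{W}_{[X,X]_t}$, which exhibits the target process as a time change of $\widetilde{W}$ by $\nu_t:=[X,X]_t$.

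Next I would check that $\widetilde{W}=W-f$ is sticky and that $\nu$ satisfies the hypotheses of Proposition~\ref{timechange}. Since $W$ has full support in the Wiener space and $f$ is a fixed continuous function with $f(0)=0$, the translate $\widetilde{W}=W-f$ again has full support (any continuous target $g$ with $g(0)=0$ is approximated by $\widetilde{W}$ exactly when $g+f$ is approximated by $W$); hence $\widetilde{W}$ is sticky with respect to $\mathcal{G}$ by the full-support criterion, Proposition 4.1 of \cite{Gua} (whose argument uses only the strong Markov property of $W$ relative to its filtration). As for $\nu_t=[X,X]_t$: it is continuous and non-decreasing because $X$ is a continuous local martingale, $\nu_0=[X,X]_0=0$ since $X_0=0$, and $\nu_t$ is bounded a.s. for each $t$ by hypothesis. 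It is moreover a $\mathcal{G}$-stopping time: using continuity of $[X,X]$ one verifies $\{\nu_t\le s\}=\{[X,X]_t\le s\}=\{\tau_s\ge t\}\in\mathcal{F}_{\tau_s}=\mathcal{G}_s$. Therefore Proposition~\ref{timechange} yields that $\widetilde{W}_{\nu_t}=X_t-f([X,X]_t)$ is sticky with respect to $\widetilde{\mathcal{F}}_t=\mathcal{G}_{\nu_t}=\mathcal{F}_{\tau_{[X,X]_t}}$, under $Q$.

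Finally I would descend from this enlarged filtration back to $\mathbb{F}$. Because $\tau_{[X,X]_t}\ge t$, we have $\mathcal{F}_t\subseteq\mathcal{F}_{\tau_{[X,X]_t}}$, so $\mathbb{F}$ is a subfiltration of $\widetilde{\mathbb{F}}$; and $X_t-f([X,X]_t)$ is $\mathbb{F}$-progressively measurable since both $X$ and $[X,X]$ are continuous and $\mathbb{F}$-adapted. The first part of Remark~\ref{stickyremark} then delivers stickiness with respect to $(\mathbb{F},Q)$, and the second part transfers it to $(\mathbb{F},P)$, completing the argument. I expect the main obstacle to be the filtration bookkeeping around the DDS time change --- namely verifying that $\nu_t=[X,X]_t$ is a $\mathcal{G}$-stopping time and that the resulting stickiness can be pushed back down to the original filtration $\mathbb{F}$ through Remark~\ref{stickyremark} --- whereas the stickiness of the drifted Brownian motion $W-f$ is a clean consequence of the full-support criterion.
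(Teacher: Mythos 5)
Your proposal is correct and follows essentially the same route as the paper's proof: pass to $Q$ via Remark~\ref{stickyremark}, apply the Dambis--Dubins--Schwarz theorem to write $X_t=W_{[X,X]_t}$, observe that the drifted Brownian motion $W-f$ is sticky with respect to the time-changed filtration, invoke Proposition~\ref{timechange} with $\nu_t=[X,X]_t$, and descend to $\mathbb{F}$ by the subfiltration part of Remark~\ref{stickyremark}. The only step you compress is the inclusion $\mathcal{F}_t\subseteq \mathcal{G}_{\nu_t}$, which you obtain by identifying $\mathcal{G}_{\nu_t}$ with $\mathcal{F}_{\tau_{[X,X]_t}}$ and using $\tau_{[X,X]_t}\geq t$; the paper instead verifies this inclusion directly from the definitions, checking $A\cap\{[X,X]_t\le s\}=A\cap\{\tau_s\ge t\}\in\mathcal{F}_{\tau_s}$ for every $A\in\mathcal{F}_t$ --- the same bookkeeping made explicit.
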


\begin{proof} Due to Remark \ref{stickyremark} we only need to show
  $X_t-f([X,X]_t)$ is sticky with respect to
  $(\mathbb{F}, Q)$. Let $T_t=\inf\{s\geq 0: [X,X]_s >t \}$. Then since $X$ is a continuous local martingale under $Q$, by Theorem 1.6 in Chapter V of \cite{RY}, $B_t=X_{T_t}$ is an 
$\widetilde{\mathcal{F}}_t=\mathcal{F}_{T_t}$-Brownian motion under $Q$ (The assumption $\lim_{t \rightarrow \infty}  [X,X]_t=\infty$ was needed to apply this result). 
Since $f$ is a continuous deterministic function null at zero, the
process $B_t+f(t)$ is sticky with respect to $\left(\widetilde{\mathcal{F}}_t\right)_{t \geq 0}$ and $Q$ (also see example 4.2 of
\cite{Gua}). We have the obvious relation 
$X_t+f([X,X]_t)=B_{[X,X]_t}+f([X,X]_t)$. Since $[X,X]_t$ is  stopping time
for the filtration $(\widetilde{\mathcal{F}}_t)$, $X_t+f([X,X]_t)$ is a time
change of $B_t+f(t)$. So by Proposition \ref{timechange},
$X_t-f([X,X])$ is sticky with respect to the filtration
$(\widetilde{\mathcal{F}}_{[X,X]_t})_{t \geq 0}$. Now, we will show that $\mathbb{F}$ is 
sub-filtration of $(\widetilde{\mathcal{F}}_{[X,X]_t})$, i.e. for any
$A\in \mathcal{F}_t$, we want to show that $A\in \widetilde{\mathcal{F}}_{[X, X]_t}$. This
is equivalent to showing $A\cap \{[X, X]_t\le s\}\in 
\widetilde{\mathcal{F}}_s=\mathcal{F}_{T_s}$ for all $s\geq 0$. And this is
equivalent to showing $A\cap \{[X, X]_t\le s\}\cap \{T_s \le u \}\in 
\mathcal{F}_{u}$, for all $s\geq 0, u \geq 0$. But since 
$\{[X, X]_t\le s\}=\{T_s\geq t\}$, all we need to show is
$A\cap \{T_s\geq t\}\cap \{T_s\le u\}\in \mathcal{F}_{u}$. 
Since $A\cap \{T_s\geq t\}\in \mathcal{F}_{T_s}$, the result follows
by the definition of $\sigma$-algebra $\mathcal{F}_{T_s}$. 
Finally, since $X_t-f([X,X]_t)$ is sticky with respect to the filtration  
$(\widetilde{\mathcal{F}}_{[X,X]_t})_{t \geq 0}$ and $\mathbb{F}$ is subfiltration of
$(\widetilde{\mathcal{F}}_{[X,X]_t})_{t \geq 0}$, the result 
follows by Remark \ref{stickyremark}.
 \end{proof}

The next example demonstrates that not all semimartingales are sticky. 

\begin{example} Let $X_t=\int_0^tH_sdB_s$, where $B$ is a standard Brownian motion, and let 
\[
H_s=\left \{
\begin{array}{ll}
\frac{1}{1-s} & \mbox{when $s \le \tau^{\star}$}, \\ 
1 & \mbox{when $s> \tau^{\star}$},
\end{array}
\right.
\]
in which $\tau^{\star}=\inf\{t \geq 0: |\int_0^t\frac{1}{1-s}dB_s|>2\}$. Observe that $\tau^{\star}\le 1$. 
Choosing $\tau=0, A=\Omega, T=1, \varepsilon=1$
we have that $P(A\cap\{\sup_{t\in [\tau, T]}|X_t-X_{\tau}|\geq \varepsilon \} )=0$. Due to Proposition ~\ref{prop:new-char} we conclude that the martingale $X$ is not sticky.
\end{example}

The main use of Proposition~\ref{time} is to create examples of sticky processes that are not semimartingales. For example it can be used along with Proposition~\ref{stickymonotone} to give the following example. 
\begin{example}
Let $M$ be a continuous local martingale with $[M, M]_t$ is bounded for each $t$, $M_0=0$ and $\lim_{t \to \infty}[M,M]_t=\infty$. The process $X_t=|M_t|^{\frac{1}{3}}$ is not a semimartingale; see
Theorem 72 on page 221 of \cite{Pro}. However, it follows from Propositions~\ref{stickymonotone} and \ref{time} that $X$ is sticky.
\end{example}

It should be noted that $X_t-f([X,X]_t)$ in Proposition~\ref{time} does not have to be a semi-martingale as we see in the next example.
\begin{example}
Let $B_t$ be a Brownian motion and define $\tau=\inf\{t\geq 0: |B_t|\geq 1\}$.
Consider $X_t=\int_0^tB_{s\wedge \tau}dB_s-h(\int_0^tB_{s\wedge \tau}^2ds)$ in which 
 $h(x)=xcos\frac{\pi}{x}$ when $x\neq 0$ and $h(0)=0$.
The process $X$ is not semimartingale (since the total variation of $h$ is unbounded on any interval $[0,b]$ for $b>0$); however, it is sticky thanks to Proposition~\ref{time}.
\end{example} 

\section*{Acknowledgment} We would like to thank the three anonymous referees for their feedback, which helped us improve our paper in significant ways.

{\small
 \bibliographystyle{plain}
\bibliography{references}
}

\end{document}